\def\qed{\leavevmode\unskip\penalty9999 \hbox{}\nobreak\hfill
     \quad\hbox{\leavevmode  \hbox to.77778em{%
              \hfil\vrule   \vbox to.675em%
               {\hrule width.6em\vfil\hrule}\vrule\hfil}}
     \par\vskip3pt}
\newtheorem{theorem}{Theorem}
\newtheorem{proof}{Proof}
\begin{document}

\title{Quantum Discord for Multi-qubit Systems}
\author{Bo Li}
\email{libobeijing2008@163.com.}
\affiliation{School of Mathematics and Computer science, Shangrao Normal University, Shangrao 334001, China}
\author{Chen-Lu Zhu}
\email{zhuchenlu2020@163.com.}
\affiliation{Department of Mathematics, East China University of Technology, Nanchang 330013, China}
\author{Xiao-Bin Liang}
\email{liangxiaobin2004@126.com.}
\affiliation{School of Mathematics and Computer science, Shangrao Normal University, Shangrao 334001, China}
\author{Biao-Liang Ye}
\email{biaoliangye@gmail.com.}
\affiliation{Quantum Information Research Center, Shangrao Normal University, Shangrao 334001, China}
\author{Shao-Ming Fei}
\email{feishm@cnu.edu.cn.}
\affiliation{School of Mathematical Sciences, Capital Normal University, Beijing 100048, China}
\affiliation{Max-Planck-Institute for Mathematics in the Sciences, 04103, Leipzig, Germany}

\begin{abstract}
We evaluate analytically the quantum discord for a large family of multi-qubit states. It is interesting to note that the quantum discord of three-qubits and five-qubits is the same, as is the quantum discord of two-qubits and six-qubits. We discover that the quantum discord of this family states can be concluded into three categories. The level surfaces of the quantum discord in the three categories is  shown through images.  Furthermore, we investigated the dynamic behavior of quantum discord under decoherence. For the odd partite systems, we prove the frozen phenomenon of quantum discord doesn't exist under the phase flip channel, while it can be found in the even partite systems.
\end{abstract}
\pacs{03.67.Mn, 03.65.Ud,  03.65.Yz}
\maketitle
\section{\bf Introduction}

Quantum correlations are essential features of quantum mechanics which distinguish the quantum from the classical world and play very important roles in quantum information processing.
The quantum correlated states are shown to be more useful than the classically correlated ones in performing communication and computation tasks. Understanding and quantifying various quantum correlations are the primary goals in quantum information theory. The quantum entanglement and nonlocal correlations can be considered the most fundamental
resources in quantum information processing
\cite{Adesso,Ma,Baumgratz,Streltsov,Streltsov1,Styliaris,Xi,Xu,Xiong,Huang,Oszmaniec}, which are tightly
related to quantum coherence \cite{Theurer,Yao,Yu,Jin,zhuhj,xiya}.

The quantum discord is of the most famous quantum correlations proposed by Ollivier and Zurek \cite{Ollivier} and Henderson and Vedral \cite{Henderson}, which quantifies the quantum correlations in bipartite systems without quantum entanglement. It is defined as the minimum difference between the quantum versions of two classically equivalent expressions of mutual information under projective measurements \cite{Rulli,Luo}. Due to the complexity of the minimization process, the computation of quantum discord is a hard task and seldom analytic results are know only for some restricted families of states \cite{Ali,Vinjanampathy,Xie,luo1,lang,ali}. For a bipartite state $\rho$ in systems $A$ and $B$\cite{Ollivier,Henderson}, the quantum discord $D_{A;B} (\rho)$ is defined by
$D_{A;B} (\rho) = \min \limits_{\Pi^{A}}  \Big[ S_{B| \Pi^{A}} (\rho) - S_{B|A} (\rho) \Big],$
where the conditional entropy $S_{B|A} (\rho)=S(\rho)-S(\rho_{A})$ with $S(X)=-Tr X\log_2 X$ the von Neumann entropy of a state $X$, $\rho_A$ is the reduced state associated to the system $A$. $S_{B| \Pi^{A}}(\rho)=\sum\limits_jp_{j}^{A}S(\Pi^{A}_{j}  \rho  \Pi^{A}_{j}  / p_{j}^{A}),$
where $\Pi^{A}_{j}$ is the von Neumann projection operator on the subsystem A and $p_{j}^{A}=Tr(\Pi^{A}_{j}  \rho  \Pi^{A}_{j})$ is the probability with respect to the measurement outcome $j$.

Very recently, Radhakrishnan \emph{et.al} \cite{Radhakrishnan} introduced a generalization of discord for tripartite and multipartite states.
One of the main features of this approach is the use of conditional measurements, where each successive measurement is conditionally related to the previous measurements.
The $(N-1)$-partite measurement is written as
\begin{align}
\Pi_{j_1\cdots j_{N-1}}^{A_1\cdots A_{N-1}}=\Pi_{j_1}^{A_1}\otimes\Pi_{j_2|j_1}^{A_2}\cdots \otimes \Pi_{j_{N-1}|j_1\cdots j_{N-2}}^{A_{N-1}},\nonumber
\end{align}
where $\Pi_{j_1|j_2}^{A_2}$ is a projector on subsystem $A_2$ conditioned on the measurement outcome of $A_1$.
Here the measurements take place in the order $A_1 \rightarrow A_2 \rightarrow \cdots A_{N-1}.$
Such conditioned measurements are essential to take into account all the classical correlations that may exist among the subsystems.
Viewing the measurements as operations to break the quantum correlations, the optimization over all such measurements allows one to recover the pure quantum contributions.
Moreover, there is an obvious asymmetry due the fixed ordering of the measurements.
This asymmetry has similarities with the quantum steering where one also considers measurements on part of a system, while
the aims are somewhat different in that for discord, one minimizes the disturbance due to measurements rather than compares it to a local hidden state theory.
In deed, in some quantum information processing such as one-way quantum computing, there is a definite ordering of measurements,
in compatible with the multipartite discord.

The quantum discord of an N-partite state $\rho$ is defined by
\begin{align}
 D_{A_1;A_2;  \dots ;A_N} (\rho)  = & \min_{ \Pi^{A_1 \dots A_{N-1} }} \Big[
- S_{A_2 \dots A_N|A_1 }(\rho) \nonumber \\
& +  S_{A_2| \Pi^{A_1} } (\rho)  \dots + S_{A_N|  \Pi^{A_1 \dots A_{N-1} } }  (\rho) \Big]
\label{multipartitediscord}
\end{align}
for the measurement ordering $ A_1 \rightarrow A_2 \rightarrow \dots A_{N-1} $. Here
 we have defined $ S_{A_k|\Pi^{A_1 \dots A_{k-1}}} (\rho) =
\sum_{j_1 \dots j_{k-1}} p_{j}^{(k-1)}
S_{A_1 \dots A_k}( \Pi^{(k-1)}_{j}  \rho  \Pi^{(k-1)}_{j}  / p_{j}^{(k-1)} )$
with $ \Pi^{(k)}_{{j}} \equiv \Pi^{A_1 \dots A_{k}}_{j_1 \dots j_{k}} $,
$  p_{j}^{(k)}  = \text{Tr} (  \Pi^{(k)}_{j}  \rho  \Pi^{(k)}_{j}  ) $.

In general, it is difficult to evaluate the quantum discord (\ref{multipartitediscord}) due to the complexity of the optimization. We analyze and evaluate this quantum discord for a family of multi-qubit states, and graphically show the level surfaces of the quantum discord of this family. Moreover, due to the interaction with the environments, bipartite quantum discord may decrease asymptomatically with time \cite{Ban}, and may be also frozen \cite{Mazzola,Maziero,Haikka,Bromley} and decoherence free for certain time. We also study the dynamic behavior of quantum discord for a family of three-qubit and four-qubit states under decoherence. We discover that the multi-qubit quantum discord of some states cannot be destroyed by decoherence in finite time.

The rest of this article is organized as follows. In Sec. \uppercase\expandafter{\romannumeral2}, we calculate analytically the multi-qubit discord for a family of quantum states. We shown that the quantum discord can be classified into three categories. In Sec. \uppercase\expandafter{\romannumeral3}, we investigated the dynamical behavior of the discord for a family of three-qubit and four-qubit states. We discuss and summarize the results in Sec. \uppercase\expandafter{\romannumeral4}.

\section{\bf Quantum discord for multi-qubit systems}

Consider the following family of $N$-qubit states,
\begin{equation}\label{rho}
\rho=\frac{1}{2^N}(I+\sum\limits_{j=1}^3c_j\sigma_j\otimes\cdots\otimes\sigma_j),
\end{equation}
where $\sigma_j$, $j=1,2,3$, are the standard Pauli matrices, $I$ stands for the corresponding identity operator. The motivation to consider the states (\ref{rho}) is that, for $N=2$ (\ref{rho}) reduces to the well-known Bell diagonal states whose famous analytical formulae of quantum discord have been provided by Luo \cite{luo1}, which attracted much attention and resulted in further vital results. For general $N$, these states are highly symmetric and include some generalized GHZ or W states as special ones.

We consider the family of three-qubit state, associated with systems $A$, $B$ and $C$,
\begin{equation}\label{rho3}
\rho=\frac{1}{8}(I+\sum\limits_{j=1}^3c_j\sigma_j\otimes\sigma_j\otimes\sigma_j).
\end{equation}

From (\ref{multipartitediscord}) the quantum discord is given by
\begin{align}
D_{A;B;C} (\rho) = & \min_{\Pi^{AB}  } \Big[  - S_{BC|A} (\rho)
 + S_{B| \Pi^{A}} (\rho) + S_{C|  \Pi^{AB}} (\rho) \Big].
\label{tripartitediscord}
\end{align}
Since $\rho_A=Tr_{BC}(\rho) = \frac{I}{2}$, we have the entropy $S(\rho_A)=1$.
Set $\xi=\sqrt{c_1^2+c_2^2+c_3^2}$. One can verify that
\begin{equation*}
S(\rho)=-4\times\frac{1+\xi}{8}\log_2\frac{1+\xi}{8}
-4\times\frac{1-\xi}{8}\log_2\frac{1-\xi}{8}.
\end{equation*}
Hence,
\begin{align}\label{BC|A}
-S_{BC|A} (\rho)&=-(S(\rho)-S(\rho_A))\nonumber\\
&=\frac{1+\xi}{2}\log_2(1+\xi)+\frac{1-\xi}{2}\log_2(1-\xi)-2.
\end{align}

Denote $\{\Pi_k=|k\rangle\langle k|:k=0,1\}$. The von Neumann measurement on subsystem $A$ is given by $\{A_k=V_A\Pi_kV_A^\dagger:k=0,1\}$, where $V_A=t_AI+i\overrightarrow{y_A}\cdot\overrightarrow{\sigma}$ is the unitary operator with
$t_A\in \mathbb{R}$, $\overrightarrow{y_A}=(y_{A1},y_{A2},y_{A3})\in \mathbb{R}^3$ and
$t_A^2+y_{A1}^2+y_{A2}^2+y_{A3}^2=1$. After the measurement ${A_k}$, the state $\rho$ is going to become the ensemble $\{\rho_k,p_k\}$ with $\rho_k:=\frac{1}{p_k}(A_k\otimes I)\rho(A_k\otimes I)$
and $p_k=Tr(A_k\otimes I)\rho(A_k\otimes I).$
Then we obtain $p_0=p_1=\frac{1}{2}$,
\begin{align}
\rho_0=&\frac{1}{4}V_A\Pi_0V_A^\dagger\otimes(I\otimes I+c_1z_{1}\sigma_1\otimes\sigma_1+c_2z_{2}\sigma_2\otimes\sigma_2\nonumber\\
&+c_3z_{3}\sigma_3\otimes\sigma_3)
\label{rho0}
\end{align}
and
\begin{align}
\rho_1=&\frac{1}{4}V_A\Pi_1V_A^\dagger\otimes(I\otimes I-c_1z_{1}\sigma_1\otimes\sigma_1-c_2z_{2}\sigma_2\otimes\sigma_2\nonumber\\
&-c_3z_{3}\sigma_3\otimes\sigma_3),
\label{rho1}
\end{align}
with
\begin{align}
& z_{1}=2(-t_Ay_{A2}+y_{A1}y_{A3}),\nonumber \\
& z_{2}=2(t_Ay_{A1}+y_{A2}y_{A3}),\nonumber \\
& z_{3}=t_A^2-y_{A1}^2-y_{A2}^2+y_{A3}^2.\nonumber
\end{align}
Thus, we have
$Tr_C(\rho_0)=\frac{1}{2}V_A\Pi_0V_A^\dagger\otimes I$ and  $Tr_C(\rho_1)=\frac{1}{2}V_A\Pi_1V_A^\dagger\otimes I$.
The average entropy of the subsystem $B$ after measure $\Pi^A$ is given by
\begin{equation}\label{B|pi^A}
S_{B|\Pi^A}(\rho)=\frac{1}{2}\times 1+\frac{1}{2}\times 1=1.
\end{equation}

To evaluate $S_{C|  \Pi^{AB}} (\rho)$, one needs to measure the subsystem $B$ under the conditions of the outcomes on measuring $A$. Let
\begin{equation*}
\{B_k^j=V_{B^j}\Pi_kV_{B^j}^\dagger:k=0,1\},~j=0,1,
\label{measureb}
\end{equation*}
be the von Neumann measurement on the subsystem $B$ when the outcome of the measurement on $A$ is $j$ ($j=0,1$), where $V_{B^j}=t_{B^j}I+i\overrightarrow{y_{B^j}}\cdot\overrightarrow{\sigma}$ is the unitary operator with $t_{B^j}\in \mathbb{R}$, $\overrightarrow{y_{B^j}}=(y_{B^j1},y_{B^j2},y_{B^j3})\in \mathbb{R}^3$ and
$t_{B^j}^2+y_{B^j1}^2+y_{B^j2}^2+y_{B^j3}^2=1$.

 If the measurement outcome on system $A$ is $0$,the state after the measurement will be reduced to $\rho_{0}$ given in  (\ref{rho0}). Notice that the subsystems $B$ and $C$ in (\ref{rho0}) is still in a Bell-diagonal state. After performing the measurement
 $\{B_k^0:k=0,1\}$, the state reduces to
 \begin{eqnarray}
\rho_{00}&=&\frac{1}{2}V_A\Pi_0V_A^\dagger\otimes V_{B^0}\Pi_0V_{B^0}^\dagger\otimes(I+c_1z_{1}l_1\sigma_1\nonumber\\
& &+c_2z_{2}l_2\sigma_2+c_3z_{3}l_3\sigma_3),\nonumber
\label{rho00}
\end{eqnarray}
\begin{eqnarray}
\rho_{01}&=&\frac{1}{2}V_A\Pi_0V_A^\dagger\otimes V_{B^0}\Pi_1V_{B^0}^\dagger\otimes(I-c_1z_{1}l_1\sigma_1\nonumber\\
& &-c_2z_{2}l_2\sigma_2-c_3z_{3}l_3\sigma_3),\nonumber
\label{rho01}
\end{eqnarray}
with the probability $p_{00}=p_{01}=\frac{1}{4}$,where
\begin{align}
& l_{1}=2(-t_{B^0}y_{B^02}+y_{B^01}y_{B^03}),\nonumber \\
& l_{2}=2(t_{B^0}y_{B^01}+y_{B^02}y_{B^03}),\nonumber\\
& l_{3}=t_{B^0}^2-y_{B^01}^2-y_{B^02}^2+y_{B^03}^2.\nonumber
\end{align}

If the measurement outcome on system $A$ is $1$, performing the measurement
 $\{B_k^1:k=0,1\}$ on the subsystem $B$ of the state $\rho_1$, we obtain
 \begin{eqnarray}
\rho_{10}&=&\frac{1}{2}V_A\Pi_1V_A^\dagger\otimes V_{B^1}\Pi_0V_{B^1}^\dagger\otimes(I-c_1z_{1}m_1\sigma_1\nonumber\\
& &-c_2z_{2}m_2\sigma_2-c_3z_{3}m_3\sigma_3),\nonumber
\label{rho10}
\end{eqnarray}
\begin{eqnarray}
\rho_{11}&=&\frac{1}{2}V_A\Pi_1V_A^\dagger\otimes V_{B^1}\Pi_1V_{B^1}^\dagger\otimes(I+c_1z_{1}m_1\sigma_1\nonumber\\
& &+c_2z_{2}m_2\sigma_2+c_3z_{3}m_3\sigma_3),\nonumber
\label{rho11}
\end{eqnarray}
with the probability $p_{10}=p_{11}=\frac{1}{4}$,
where
\begin{align}
& m_{1}=2(-t_{B^1}y_{B^12}+y_{B^11}y_{B^13}),\nonumber \\
& m_{2}=2(t_{B^1}y_{B^11}+y_{B^12}y_{B^13}),\nonumber \\
& m_{3}=t_{B^1}^2-y_{B^11}^2-y_{B^12}^2+y_{B^13}^2.\nonumber
\end{align}

The state $\rho_{\Pi^{AB}}$ is given by $\rho_{\Pi^{AB}}=p_{00}\rho_{00}+p_{01}\rho_{01}+p_{10}\rho_{10}+p_{11}\rho_{11}$.
Set $\alpha=\sqrt{c_1^2z_1^2l_1^2+c_2^2z_2^2l_2^2+c_3^2z_3^2l_3^2}$
and $\beta=\sqrt{c_1^2z_1^2m_1^2+c_2^2z_2^2m_2^2+c_3^2z_3^2m_3^2}$.
Then
\begin{align}
S_{C|\Pi^{AB}}(\rho)=&-\frac{1+\alpha}{4}\log_2(1+\alpha)-\frac{1-\alpha}{4}\log_2(1-\alpha)\nonumber\\
&-\frac{1+\beta}{4}\log_2(1+\beta)-\frac{1-\beta}{4}\log_2(1-\beta)\nonumber\\
&+1.\nonumber
\end{align}
It can be directly verified that $z_1^2+z_2^2+z_3^2=1$, $l_1^2+l_2^2+l_3^2=1$, $m_1^2+m_2^2+m_3^2=1$. Denote
\begin{equation}\label{c}
c:=max\{|c_1|,|c_2|,|c_3|\}.
\end{equation}
Then
\begin{equation}\label{alpha}
\alpha\leq\sqrt{|c^2|(|z_1|^2|l_1|^2+|z_2|^2|l_2|^2+|z_3|^2|l_3|^2)}=c,
\end{equation}
and
\begin{equation}\label{beta}
\beta\leq\sqrt{|c^2|(|z_1|^2|m_1|^2+|z_2|^2|m_2|^2+|z_3|^2|m_3|^2)}=c.
\end{equation}

The equality holds in (\ref{alpha}) for the following cases:
(1) if $c=|c_1|$, then $|z_1|=|l_1|=1,z_2=z_3=l_2=l_3=0$. For instance, $|t_A|=|y_{A2}|=|t_{B^0}|=|y_{B^02}|=\frac{1}{\surd2}$ and $y_{A1}=y_{A3}=y_{B^01}=y_{B^03}=0;$
(2) if $c=|c_2|$, then $|z_2|=|l_2|=1,z_1=z_3=l_1=l_3=0$. For example, $|t_A|=|y_{A1}|=|t_{B^0}|=|y_{B^01}|=\frac{1}{\surd2}$ and $y_{A2}=y_{A3}=y_{B^02}=y_{B^03}=0;$
(3) if $c=|c_3|,$ then $|z_3|=|l_3|=1,z_1=z_2=l_1=l_2=0,$ e.g., $y_{A1}=y_{A2}=y_{B^01}=y_{B^02}=0.$
Similarly, one can prove that the equality holds in (\ref{beta}) too for the above cases.
Therefore, we obtain
\begin{align}\label{C|pi^AB}
&min(S_{C|\Pi^{AB}}(\rho))\nonumber\\
=&-\frac{1+c}{2}\log_2(1+c)-\frac{1-c}{2}\log_2(1-c)+1.
\end{align}

From (\ref{BC|A}), (\ref{B|pi^A}) and (\ref{C|pi^AB}), we get the quantum discord
\begin{align}
D_{A;B;C} (\rho) &=  \min_{\Pi^{AB}  } \Big[  - S_{BC|A} (\rho)+ S_{B| \Pi^{A}} (\rho) + S_{C|  \Pi^{AB}} (\rho) \Big]\nonumber\\
&=\frac{1+\xi}{2}\log_2(1+\xi)+\frac{1-\xi}{2}\log_2(1-\xi)\nonumber\\
&\quad-\frac{1+c}{2}\log_2(1+c)-\frac{1-c}{2}\log_2(1-c).\label{cl3}
\end{align}

We now consider the family of four-qubit case,
\begin{equation}\label{rho4}
\rho=\frac{1}{16}(I+\sum\limits_{j=1}^3c_j\sigma_j\otimes\sigma_j\otimes\sigma_j\otimes\sigma_j)
\end{equation}
in systems $A_1$, $A_2$, $A_3$ and $A_4$. The four-qubit quantum discord is given by
\begin{align}
D_{A_1;A_2;A_3;A_4} (\rho) =&  \min_{\Pi^{A_1A_2A_3}  } \Big[  - S_{A_2A_3A_4|A_1} (\rho)\nonumber\\
 &+ S_{A_2| \Pi^{A_1}} (\rho) + S_{A_3|  \Pi^{A_1A_2}} (\rho) \nonumber\\
 &+ S_{A_4|  \Pi^{A_1A_2A_3}} (\rho) \Big].
\label{fourpartitediscord}
\end{align}

For (\ref{rho4}) we have $\rho_{A_1}=Tr_{A_2A_3A_4} (\rho) = \frac{I}{2}$ and the entropy of the subsystem $A_1$ is $S(\rho_{A_1})=1.$ It can be directly verified that
\begin{align}
S(\rho)=&-\frac{1}{4}[(1+c_1-c_2-c_3)\log_2(1+c_1-c_2-c_3)\nonumber\\
&+(1-c_1+c_2-c_3)\log_2(1-c_1+c_2-c_3)\nonumber\\
&+(1-c_1-c_2+c_3)\log_2(1-c_1-c_2+c_3)\nonumber\\
&+(1+c_1+c_2+c_3)\log_2(1+c_1+c_2+c_3)]+4.\nonumber
\end{align}
Therefore,
\begin{align}
&-S_{A_2A_3A_4|A_1} (\rho)\nonumber\\
=&\frac{1}{4}[(1+c_1-c_2-c_3)\log_2(1+c_1-c_2-c_3)\nonumber\\
&+(1-c_1+c_2-c_3)\log_2(1-c_1+c_2-c_3)\nonumber\\
&+(1-c_1-c_2+c_3)\log_2(1-c_1-c_2+c_3)\nonumber\\
&+(1+c_1+c_2+c_3)\log_2(1+c_1+c_2+c_3)]-3.
\end{align}

The von Neumann measurement on the subsystem $A_1$ is given by
$\{A_{1k}=V_{A_1}\Pi_kV_{A_1}^\dagger:~k=0,1\}$,
where $V_{A_1}=t_{A_1}I+i\overrightarrow{y_{A_1}}\cdot\overrightarrow{\sigma}$, with $t_{A_1}\in \mathbb{R}$, $\overrightarrow{y_{A_1}}=(y_{{A_1}1},y_{{A_1}2},y_{{A_1}3})\in \mathbb{R}^3$, and $t_{A_1}^2+y_{{A_1}1}^2+y_{{A_1}2}^2+y_{{A_1}3}^2=1.$

The state $\rho_{\Pi^{A_1}}$ is given by $\rho_{\Pi^{A_1}}=p_0\rho_0+p_1\rho_1$,
where $p_0=p_1=\frac{1}{2}$, and
\begin{align}
\rho_0=&\frac{1}{8}V_{A_1}\Pi_0V_{A_1}^\dagger\otimes(I\otimes I\otimes I+c_1d_{1}\sigma_1\otimes\sigma_1\otimes\sigma_1\nonumber\\
&+c_2d_{2}\sigma_2\otimes\sigma_2\otimes\sigma_2
+c_3d_{3}\sigma_3\otimes\sigma_3\otimes\sigma_3),\nonumber
\end{align}
\begin{align}
\rho_1=&\frac{1}{8}V_{A_1}\Pi_1V_{A_1}^\dagger\otimes(I\otimes I\otimes I-c_1d_{1}\sigma_1\otimes\sigma_1\otimes\sigma_1\nonumber\\
&-c_2d_{2}\sigma_2\otimes\sigma_2\otimes\sigma_2-c_3d_{3}\sigma_3\otimes\sigma_3\otimes\sigma_3),\nonumber
\end{align}
where
\begin{align}
& d_{1}=2(-t_{A_1}y_{{A_1}2}+y_{{A_1}1}y_{{A_1}3}),\nonumber \\
& d_{2}=2(t_{A_1}y_{{A_1}1}+y_{{A_1}2}y_{{A_1}3}),\nonumber \\
& d_{3}=t_{A_1}^2-y_{{A_1}1}^2-y_{{A_1}2}^2+y_{{A_1}3}^2.\nonumber
\end{align}
Thus, we have
$Tr_{A_3A_4}(\rho_0)=\frac{1}{2}V_{A_1}\Pi_0V_{A_1}^\dagger\otimes I$ and $Tr_{A_3A_4}(\rho_1)=\frac{1}{2}V_{A_1}\Pi_1V_{A_1}^\dagger\otimes I$.
The average entropy of the subsystem $A_2$ after the measurement $\Pi^{A_1}$ is given by
$S_{{A_2}|\Pi^{A_1}}(\rho)=1$.

To evaluate $S_{A_3|\Pi^{A_1A_2}} (\rho)$ and $S_{A_4|\Pi^{A_1A_2A_3}} (\rho)$, we need to measure the subsystem $A_2$ based on the measurement outcomes on $A_1$. We obtain
\begin{align}
\rho_{00}=&\frac{1}{4}V_{A_1}\Pi_0V_{A_1}^\dagger\otimes V_{A_2^0}\Pi_0V_{A_2^0}^\dagger\otimes(I\otimes I+c_1d_{1}e_1\sigma_1\otimes\sigma_1\nonumber\\
&+c_2d_{2}e_2\sigma_2\otimes\sigma_2+c_3d_{3}e_3\sigma_3\otimes\sigma_3),\nonumber
\end{align}
\begin{align}
\rho_{01}=&\frac{1}{4}V_{A_1}\Pi_0V_{A_1}^\dagger\otimes V_{A_2^0}\Pi_1V_{A_2^0}^\dagger\otimes(I\otimes I-c_1d_{1}e_1\sigma_1\otimes\sigma_1\nonumber\nonumber\\
&-c_2d_{2}e_2\sigma_2\otimes\sigma_2-c_3d_{3}e_3\sigma_3\otimes\sigma_3),\nonumber
\end{align}
 \begin{align}
\rho_{10}=&\frac{1}{4}V_{A_1}\Pi_1V_{A_1}^\dagger\otimes V_{A_2^1}\Pi_0V_{A_2^1}^\dagger\otimes(I\otimes I-c_1d_{1}f_1\sigma_1\otimes\sigma_1\nonumber\\
&-c_2d_{2}f_2\sigma_2\otimes\sigma_2-c_3d_{3}f_3\sigma_3\otimes\sigma_3),\nonumber
\end{align}
\begin{align}
\rho_{11}=&\frac{1}{4}V_{A_1}\Pi_1V_{A_1}^\dagger\otimes V_{A_2^1}\Pi_1V_{A_2^1}^\dagger\otimes(I\otimes I+c_1d_{1}f_1\sigma_1\otimes\sigma_1\nonumber\\
&+c_2d_{2}f_2\sigma_2\otimes\sigma_2+c_3d_{3}f_3\sigma_3\otimes\sigma_3),\nonumber
\end{align}
where the $k$ in the unitary $\{V_{A_2^k}:k=0,1\}$ is the outcome of the measurement of $A_1$ , and $V_{A_2^k}$ can be written as $V_{A_2^k}=t_{A_2^k}I+i\overrightarrow{y_{A_2^k}}\cdot\overrightarrow{\sigma}$, with $t_{A_2^k}\in \mathbb{R}$, $\overrightarrow{y_{A_2^k}}=(y_{{A_2^k}1},y_{{A_2^k}2},y_{{A_2^k}3})\in \mathbb{R}^3$, and
$t_{A_2^k}^2+y_{{A_2^k}1}^2+y_{{A_2^k}2}^2+y_{{A_2^k}3}^2=1,$
\begin{align}
& e_{1}=2(-t_{A_2^0}y_{A_2^02}+y_{A_2^01}y_{A_2^03}),\nonumber \\
& e_{2}=2(t_{A_2^0}y_{A_2^01}+y_{A_2^02}y_{A_2^03}),\nonumber\\
& e_{3}=t_{A_2^0}^2-y_{A_2^01}^2-y_{A_2^02}^2+y_{A_2^03}^2,\nonumber
\end{align}
\begin{align}
& f_{1}:=2(-t_{A_2^1}y_{A_2^12}+y_{A_2^11}y_{A_2^13}),\nonumber \\
& f_{2}:=2(t_{A_2^1}y_{A_2^11}+y_{A_2^12}y_{A_2^13}),\nonumber \\
& f_{3}:=t_{A_2^1}^2-y_{A_2^11}^2-y_{A_2^12}^2+y_{A_2^13}^2.\nonumber
\end{align}

The state $\rho_{\Pi^{A_1A_2}}$ is given by $\rho_{\Pi^{A_1A_2}}=p_{00}\rho_{00}+p_{01}\rho_{01}+p_{10}\rho_{10}+p_{11}\rho_{11}$.
Thus, we have
$Tr_{A_4}(\rho_{00})=\frac{1}{2}V_{A_1}\Pi_0V_{A_1}^\dagger\otimes V_{A_2^0}\Pi_0V_{A_2^0}^\dagger\otimes I,$
$Tr_{A_4}(\rho_{01})=\frac{1}{2}V_{A_1}\Pi_0V_{A_1}^\dagger\otimes V_{A_2^0}\Pi_1V_{A_2^0}^\dagger\otimes I,$
$Tr_{A_4}(\rho_{10})=\frac{1}{2}V_{A_1}\Pi_1V_{A_1}^\dagger\otimes V_{A_2^1}\Pi_0V_{A_2^1}^\dagger\otimes I,$
$Tr_{A_4}(\rho_{11})=\frac{1}{2}V_{A_1}\Pi_1V_{A_1}^\dagger\otimes V_{A_2^1}\Pi_1V_{A_2^1}^\dagger\otimes I$.
The average entropy of the sub system $A_3$ after the measurement $\Pi^{A_1A_2}$ is given by
$S_{{A_3}|\Pi^{A_1A_2}}(\rho)=4\times(\frac{1}{4}\times 1)=1$.

To evaluate $S_{A_4|  \Pi^{A_1A_2A_3}} (\rho)$, one needs to continue to measure the subsystem $A_3$ based on the measurement outcomes on $A_1$ and $A_2$. We obtain
 \begin{align}
\rho_{000}=&\frac{1}{2}V_{A_1}\Pi_0V_{A_1}^\dagger\otimes V_{A_2^0}\Pi_0V_{A_2^0}^\dagger\otimes V_{A_3^{00}}\Pi_0V_{A_3^{00}}^\dagger\otimes(I\nonumber\\
&+c_1d_1e_1g_1\sigma_1+c_2d_2e_2g_2\sigma_2+c_3d_3e_3g_3\sigma_3),\nonumber
\end{align}
\begin{align}
\rho_{001}=&\frac{1}{2}V_{A_1}\Pi_0V_{A_1}^\dagger\otimes V_{A_2^0}\Pi_0V_{A_2^0}^\dagger\otimes V_{A_3^{00}}\Pi_1V_{A_3^{00}}^\dagger\otimes(I\nonumber\\
&-c_1d_1e_1g_1\sigma_1-c_2d_2e_2g_2\sigma_2-c_3d_3e_3g_3\sigma_3),\nonumber
\end{align}
\begin{align}
\rho_{010}=&\frac{1}{2}V_{A_1}\Pi_0V_{A_1}^\dagger\otimes V_{A_2^0}\Pi_1V_{A_2^0}^\dagger\otimes V_{A_3^{01}}\Pi_0V_{A_3^{01}}^\dagger\otimes(I\nonumber\\
&-c_1d_1e_1h_1\sigma_1-c_2d_2e_2h_2\sigma_2-c_3d_3e_3h_3\sigma_3),\nonumber
\end{align}
\begin{align}
\rho_{011}=&\frac{1}{2}V_{A_1}\Pi_0V_{A_1}^\dagger\otimes V_{A_2^0}\Pi_1V_{A_2^0}^\dagger\otimes V_{A_3^{01}}\Pi_1V_{A_3^{01}}^\dagger\otimes(I\nonumber\\
&+c_1d_1e_1h_1\sigma_1+c_2d_2e_2h_2\sigma_2+c_3d_3e_3h_3\sigma_3),\nonumber
\end{align}
\begin{align}
\rho_{100}=&\frac{1}{2}V_{A_1}\Pi_1V_{A_1}^\dagger\otimes V_{A_2^1}\Pi_0V_{A_2^1}^\dagger\otimes V_{A_3^{10}}\Pi_0V_{A_3^{10}}^\dagger\otimes(I\nonumber\\
&-c_1d_1f_1n_1\sigma_1-c_2d_2f_2n_2\sigma_2-c_3d_3f_3n_3\sigma_3),\nonumber
\end{align}
\begin{align}
\rho_{101}=&\frac{1}{2}V_{A_1}\Pi_1V_{A_1}^\dagger\otimes V_{A_2^1}\Pi_0V_{A_2^1}^\dagger\otimes V_{A_3^{10}}\Pi_1V_{A_3^{10}}^\dagger\otimes(I\nonumber\\
&+c_1d_1f_1n_1\sigma_1+c_2d_2f_2n_2\sigma_2+c_3d_3f_3n_3\sigma_3),\nonumber
\end{align}
\begin{align}
\rho_{110}=&\frac{1}{2}V_{A_1}\Pi_1V_{A_1}^\dagger\otimes V_{A_2^1}\Pi_1V_{A_2^1}^\dagger\otimes V_{A_3^{11}}\Pi_0V_{A_3^{11}}^\dagger\otimes(I\nonumber\\
&+c_1d_1f_1r_1\sigma_1+c_2d_2f_2r_2\sigma_2+c_3d_3f_3r_3\sigma_3),\nonumber
\end{align}
\begin{align}
\rho_{111}=&\frac{1}{2}V_{A_1}\Pi_1V_{A_1}^\dagger\otimes V_{A_2^1}\Pi_1V_{A_2^1}^\dagger\otimes V_{A_3^{11}}\Pi_1V_{A_3^{11}}^\dagger\otimes(I\nonumber\\
&-c_1d_1f_1r_1\sigma_1-c_2d_2f_2r_2\sigma_2-c_3d_3f_3r_3\sigma_3),\nonumber
\end{align}
where the $k$ in the unitary $\{V_{A_3^{ku}}:k=0,1;u=0,1\}$ is the outcome of the measurement of $A_1$ , and $u$ is the outcome of the measurement of $A_2$.
Denote
\begin{align}
\mu_1=\sqrt{c_1^2d_1^2e_1^2g_1^2+c_2^2d_2^2e_2^2g_2^2+c_3^2d_3^2e_3^2g_3^2},\nonumber\\
\mu_2=\sqrt{c_1^2d_1^2e_1^2h_1^2+c_2^2d_2^2e_2^2h_2^2+c_3^2d_3^2e_3^2h_3^2},\nonumber\\
\mu_3=\sqrt{c_1^2d_1^2f_1^2n_1^2+c_2^2d_2^2f_2^2n_2^2+c_3^2d_3^2f_3^2n_3^2},\nonumber\\
\mu_4=\sqrt{c_1^2d_1^2f_1^2r_1^2+c_2^2d_2^2f_2^2r_2^2+c_3^2d_3^2f_3^2r_3^2}.\nonumber
\end{align}
We have
\begin{align}
&S_{A_4|\Pi^{A_1A_2A_3}}(\rho)\nonumber\\
=&-\frac{1+\mu_1}{8}\log_2(1+\mu_1)-\frac{1-\mu_1}{8}\log_2(1-\mu_1)\nonumber\\
&-\frac{1+\mu_2}{8}\log_2(1+\mu_2)-\frac{1-\mu_2}{8}\log_2(1-\mu_2)\nonumber\\
&-\frac{1+\mu_3}{8}\log_2(1+\mu_3)-\frac{1-\mu_3}{8}\log_2(1-\mu_3)\nonumber\\
&-\frac{1+\mu_4}{8}\log_2(1+\mu_4)-\frac{1-\mu_4}{8}\log_2(1-\mu_4)+1.\nonumber
\end{align}

It can be directly verified that $d_1^2+d_2^2+d_3^2=1,$ $e_1^2+e_2^2+e_3^2=1,$ $f_1^2+f_2^2+f_3^2=1,$ $g_1^2+g_2^2+g_3^2=1,$  $h_1^2+h_2^2+h_3^2=1,$ $n_1^2+n_2^2+n_3^2=1$ and $r_1^2+r_2^2+r_3^2=1.$ Since $\mu_1\leq c$, $\mu_2\leq c$, $\mu_3\leq c$ and $\mu_4\leq c$,
we obtain
\begin{align}
&min(S_{A_4|\Pi^{A_1A_2A_3}}(\rho))\nonumber\\
=&-\frac{1+c}{2}\log_2(1+c)-\frac{1-c}{2}\log_2(1-c)+1.
\end{align}
By the definition of the four-qubit quantum discord (\ref{fourpartitediscord}), we get
\begin{align}\label{cl4}
&D_{A_1;A_2;A_3;A_4} (\rho) \nonumber\\
=&\frac{1}{4}[(1+c_1-c_2-c_3)\log_2(1+c_1-c_2-c_3)\nonumber\\
&+(1-c_1+c_2-c_3)\log_2(1-c_1+c_2-c_3)\nonumber\\
&+(1-c_1-c_2+c_3)\log_2(1-c_1-c_2+c_3)\nonumber\\
&+(1+c_1+c_2+c_3)\log_2(1+c_1+c_2+c_3)]\nonumber\\
&-\frac{1+c}{2}\log_2(1+c)-\frac{1-c}{2}\log_2(1-c).
\end{align}

From the results of three-qubit and four-quibt states, we can prove the following conclusion for general $N$-qubit case.

\begin{theorem}
For the family of $N$-qubit states (\ref{rho}), we have the quantum discord:

(1) if $N=2v+1$, $v\in \mathbf{N^+}$,
\begin{align}\label{class1}
&D_{A_1;A_2;  \dots ;A_{2v+1}} (\rho) \nonumber\\
=&\frac{1+\xi}{2}\log_2(1+\xi)+\frac{1-\xi}{2}\log_2(1-\xi)\nonumber\\
&-\frac{1+c}{2}\log_2(1+c)-\frac{1-c}{2}\log_2(1-c),
\end{align}
where $\xi=\sqrt{c_1^2+c_2^2+c_3^2},$ $c=max\{|c_1|,|c_2|,|c_3|\};$

(2) if $N=4v-2$, $v\in \mathbf{N^+}$,
\begin{align}\label{class2}
&D_{A_1;A_2;  \dots ;A_{4v-2}} (\rho) \nonumber\\
=&\frac{1}{4}[(1-c_1-c_2-c_3)\log_2(1-c_1-c_2-c_3)\nonumber\\
&+(1-c_1+c_2+c_3)\log_2(1-c_1+c_2+c_3)\nonumber\\
&+(1+c_1-c_2+c_3)\log_2(1+c_1-c_2+c_3)\nonumber\\
&+(1+c_1+c_2-c_3)\log_2(1+c_1+c_2-c_3)]\nonumber\\
&-\frac{1+c}{2}\log_2(1+c)-\frac{1-c}{2}\log_2(1-c);
\end{align}
(3) if $N=4v$, $v\in \mathbf{N^+}$,
\begin{align}\label{class3}
&D_{A_1;A_2;  \dots ;A_{4v}} (\rho) \nonumber\\
=&\frac{1}{4}[(1+c_1-c_2-c_3)\log_2(1+c_1-c_2-c_3)\nonumber\\
&+(1-c_1+c_2-c_3)\log_2(1-c_1+c_2-c_3)\nonumber\\
&+(1-c_1-c_2+c_3)\log_2(1-c_1-c_2+c_3)\nonumber\\
&+(1+c_1+c_2+c_3)\log_2(1+c_1+c_2+c_3)]\nonumber\\
&-\frac{1+c}{2}\log_2(1+c)-\frac{1-c}{2}\log_2(1-c).
\end{align}
\end{theorem}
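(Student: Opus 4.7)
The plan is to execute the same scheme as the explicit $N=3$ and $N=4$ calculations above uniformly in $N$. Two ingredients must be supplied: the eigenvalues of $\rho$, whose structure splits by the parity of $N$, and a one-step projective-measurement lemma, which is then iterated along the ordering $A_1 \to A_2 \to \cdots \to A_{N-1}$.

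For $S(\rho)$, the key algebraic fact is that the three tensor-power Paulis $\sigma_j^{\otimes N}$ ($j=1,2,3$) pairwise anticommute when $N$ is odd and pairwise commute when $N$ is even, while their triple product is always $i^N I$. In the odd case, $\sum_j c_j \sigma_j^{\otimes N}$ squares to $\xi^2 I$ and is traceless, so its spectrum is $\{\pm\xi\}$ with multiplicities $2^{N-1}$, reproducing the $\xi$-block of (\ref{class1}). In the even case, three commuting involutions constrained by $\epsilon_1\epsilon_2\epsilon_3 = i^N \in \{\pm 1\}$ split the Hilbert space into four $2^{N-2}$-dimensional joint eigenspaces indexed by the valid sign triples; these yield exactly the four eigenvalues appearing in (\ref{class3}) when $N=4v$ and in (\ref{class2}) when $N=4v-2$. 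In every case $-S_{A_2\cdots A_N|A_1}(\rho) = -S(\rho) + 1$ produces the first (category-dependent) block of the target formulas.

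The one-step measurement lemma asserts that, for any state $\tau = \tfrac{1}{2^M}(I + \sum_j a_j \sigma_j^{\otimes M})$, a measurement $\{V\Pi_k V^\dagger\}_{k=0,1}$ (with $V = tI+i\vec y\cdot\vec\sigma$) on its first qubit yields, each with probability $1/2$, the conditional state
\[
V\Pi_k V^\dagger \otimes \tfrac{1}{2^{M-1}}\Bigl(I + (-1)^k \sum_j a_j z_j\, \sigma_j^{\otimes(M-1)}\Bigr),
\]
where $(z_1,z_2,z_3)$ is the third column of the rotation matrix associated with $V$ (the same formulas as for $z_j$ in the $N=3$ block above), so that $z_1^2+z_2^2+z_3^2=1$. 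This follows from the identity $\langle k|V^\dagger \sigma_j V|k\rangle = (-1)^k z_j$. Crucially, the residual $(M-1)$-qubit factor has the same canonical form, so iteration shows that after measuring $A_1,\ldots,A_{k-1}$ the conditional state on $A_k\cdots A_N$ has the canonical form with coefficients $\pm c_j \prod_{i<k} z_j^{(i)}$. When $2 \le k \le N-1$, at least two qubits remain, so tracing down to $A_k$ kills every Pauli-tensor term (via $Tr\,\sigma_j = 0$) and gives $S_{A_k|\Pi^{A_1\cdots A_{k-1}}}(\rho) = 1$. The only nontrivial term is $S_{A_N|\Pi^{A_1\cdots A_{N-1}}}(\rho)$, where each of the $2^{N-1}$ equally likely branches produces a single-qubit state of Bloch length $\mu = \sqrt{\sum_j c_j^2 \lambda_j}$ with $\lambda_j = \prod_{i=1}^{N-1}(z_j^{(i)})^2$; since $\lambda_j \le (z_j^{(1)})^2$ and $\sum_j (z_j^{(1)})^2 = 1$, one has $\sum_j \lambda_j \le 1$ and hence $\mu \le c = \max_j |c_j|$, with equality achievable on every branch simultaneously by aligning every measurement axis with the coordinate realizing $c$ (exactly as in the cases listed under (\ref{alpha})--(\ref{beta})). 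Monotonicity of the binary entropy in $\mu$ then gives $\min S_{A_N|\Pi^{A_1\cdots A_{N-1}}}(\rho) = 1 - \tfrac{1+c}{2}\log_2(1+c) - \tfrac{1-c}{2}\log_2(1-c)$.

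Plugging into (\ref{multipartitediscord}) gives $D = -S(\rho) + 1 + (N-2) + \min S_{A_N|\Pi^{A_1\cdots A_{N-1}}}(\rho)$, and the integer constants cancel between $-S(\rho)$ and the last-qubit entropy to produce exactly (\ref{class1}), (\ref{class2}), or (\ref{class3}) according to $N \bmod 4$. The main obstacle I anticipate is the bookkeeping in the parity analysis of $S(\rho)$: correctly identifying the sign $i^N$ and matching the four allowed sign triples $(\epsilon_1,\epsilon_2,\epsilon_3)$ with product $\pm 1$ to the specific linear combinations appearing in (\ref{class2}) versus (\ref{class3}). Everything else is a clean induction on $N$ via the one-step lemma, faithfully generalizing the $N=3$ and $N=4$ computations already displayed.
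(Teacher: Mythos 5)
Your proposal is correct and takes essentially the same route as the paper's proof: determine the spectrum of $\rho$ according to the parity of $N$ (two eigenvalues $\frac{1\pm\xi}{2^N}$ for odd $N$, four sign-constrained eigenvalues for even $N$), observe that every intermediate conditional entropy $S_{A_k|\Pi^{A_1\cdots A_{k-1}}}(\rho)$ equals $1$, and reduce the optimization to the last qubit, where the branch Bloch length is bounded by $c$ with equality attainable, exactly as in the explicit $N=3$ and $N=4$ computations. If anything, your (anti)commutation argument for the eigenvalues and the iterated one-step measurement lemma make rigorous the steps the paper only asserts (the characteristic-equation claim, the statement that all intermediate entropies are $1$, and the ``similarly proved'' even cases).
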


\begin{proof}
If $N=2v+1$, we have $Tr_{A_2A_3\dots A_{2v+1}} (\rho) = \frac{I}{2}$, and $S_{A_1}(\rho)=1.$
Let $\lambda$ be the eigenvalues of $\rho$. From the characteristic equation $det|\rho_{2v+1}-\lambda I|=0$, we get
\begin{align*}
[\frac{(1-2^{2v+1}\lambda)^2-c_1^2-c_2^2-c_3^2}{2^{(4v+2)}}]^{2^{2v}}=0.
\end{align*}
The $2^{2v}$ eigenvalues are given by $\frac{1}{2^{2v+1}}(1-\sqrt{c_1^2+c_2^2+c_3^2})$  and  $\frac{1}{2^{2v+1}}(1+\sqrt{c_1^2+c_2^2+c_3^2})$, respectively.
One can verify that
\begin{align}
&-S_{A_2;A_3;  \dots ;A_{2v+1}|A_1} (\rho)\nonumber\\
=&\frac{1+\xi}{2}\log_2(1+\xi)+\frac{1-\xi}{2}\log_2(1-\xi)-2v.
\label{prove11}
\end{align}
We obtain $S_{A_{k}|\Pi^{A_1A_2 \dots A_{k-1}}}(\rho)=1$, where  $k=2,\cdots,2v$, namely,
\begin{align}
&S_{A_2|\Pi^{A_1}}(\rho)=S_{A_3|\Pi^{A_1A_2}}(\rho)=\dots \nonumber\\
=&S_{A_{2v}|\Pi^{A_1A_2 \dots A_{2v-1}}}(\rho)=1.
\label{prove12}
\end{align}
Hence,
\begin{align}
&min(S_{A_{2v+1}|\Pi^{A_1A_2 \dots A_{2v}}}(\rho))\nonumber\\
=&-\frac{1+c}{2}\log_2(1+c)-\frac{1-c}{2}\log_2(1-c)+1.
\label{prove13}
\end{align}
By the definition (\ref{multipartitediscord}), we obtain (\ref{class1}).

If $N=4v-2$, we have $Tr_{A_2A_3\dots A_{4v-2}} (\rho) = \frac{I}{2}$ and $S_{A_1}(\rho)=1.$
$\rho$ has $2^{4v-4}$ eigenvalues given by $\frac{1}{2^{4v-2}}(1-c_1-c_2-c_3)$,   $\frac{1}{2^{4v-2}}(1-c_1+c_2+c_3)$, $\frac{1}{2^{4v-2}}(1+c_1-c_2+c_3)$ and
$\frac{1}{2^{4v-2}}(1+c_1+c_2-c_3)$, respectively.
Then we obtain
$-S_{A_2;A_3;  \dots ;A_4v-2} (\rho)=\frac{1}{4}[(1-c_1-c_2-c_3)\log_2(1-c_1-c_2-c_3)+(1-c_1+c_2+c_3)\log_2(1-c_1+c_2+c_3)
+(1+c_1-c_2+c_3)\log_2(1+c_1-c_2+c_3)+(1+c_1+c_2-c_3)\log_2(1+c_1+c_2-c_3)]-4v+3.$
The entropy after the measurement is the same as (\ref{prove12}) and (\ref{prove13}).
Therefore, we obtain (\ref{class2}). (\ref{class3}) is similarly proved.
\end{proof}

From the Theorem one has that for five-qubit states,
\begin{align}
&D_{A_1;A_2;A_3;A_4;A_5} (\rho) \nonumber\\
=&\frac{1+\xi}{2}\log_2(1+\xi)+\frac{1-\xi}{2}\log_2(1-\xi)\nonumber\\
&-\frac{1+c}{2}\log_2(1+c)-\frac{1-c}{2}\log_2(1-c),\label{cl5}
\end{align}
and for six-qubit states
\begin{align}
&D_{A_1;A_2;A_3;A_4;A_5;A_6} (\rho)  \nonumber\\
=&\frac{1}{4}[(1-c_1-c_2-c_3)\log_2(1-c_1-c_2-c_3)\nonumber\\
&+(1-c_1+c_2+c_3)\log_2(1-c_1+c_2+c_3)\nonumber\\
&+(1+c_1-c_2+c_3)\log_2(1+c_1-c_2+c_3)\nonumber\\
&+(1+c_1+c_2-c_3)\log_2(1+c_1+c_2-c_3)]\nonumber\\
&-\frac{1+c}{2}\log_2(1+c)-\frac{1-c}{2}\log_2(1-c).\label{cl6}
\end{align}
Interestingly, the result (\ref{cl5}) is equivalent to the three-qubit quantum discord (\ref{cl3}), while (\ref{cl6}) is equivalent to the two-qubit quantum discord given by Luo \cite{luo1}.

Fig. 1 shows the level surfaces of discord for $D(\rho)=0.03,0.15$ and $0.55$.
The three figures $(F_{21})$, $(F_{22})$ and $(F_{23})$ in the second row of Fig. 1 are for
$4v-2$-qubit states, which are in consist with the ones given in \cite{lang} for two-qubit states. For small discord, $D(\rho)=0.03$ and $0.15$, the level surfaces are centrally symmetric, consisting of three intersecting ''tubes" along the three coordinate axes.
For larger discord value $0.55$, these intersecting tubes expand until only a few vertices remained, where $(F_{13})$ has level surfaces in eight corners, while $(F_{23})$ and $(F_{33})$ have only four corners left.

\begin{widetext}
\begin{figure*}[tb]
\begin{center}
\includegraphics[width=15.5cm]{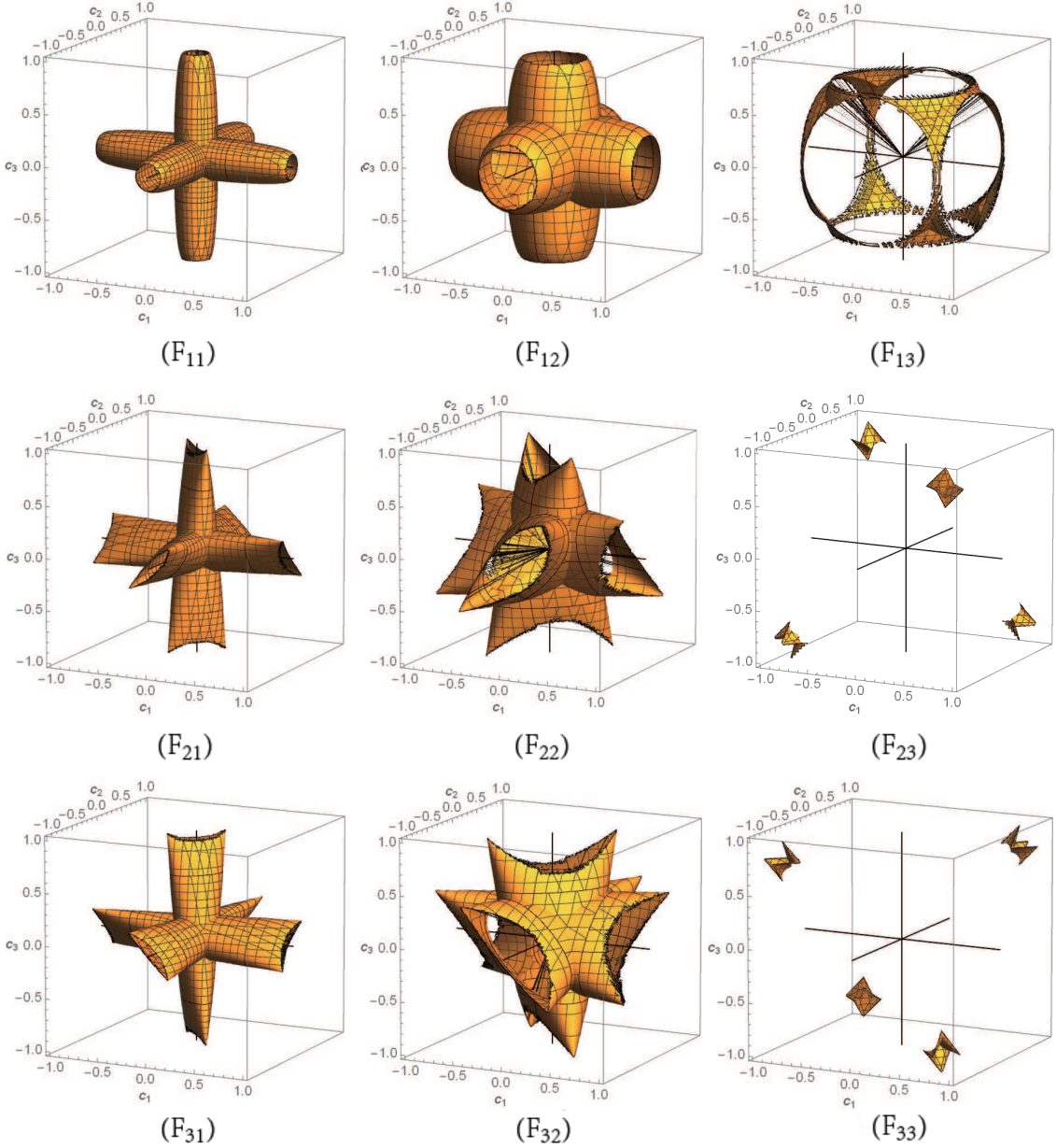}
\caption{Level surfaces of constant discord.
$N=2v+1$ for figures $(F_{11})$, $(F_{12})$ and $(F_{13})$ with $D(\rho)=0.03,~0.15$ and $0.55$, respectively.
$N=4v-2$ for figures $(F_{21})$, $(F_{22})$ and $(F_{23})$ with $D(\rho)=0.03,~0.15$ and $0.55$, respectively.
$N=4v$ for figures $(F_{31})$, $(F_{32})$ and $(F_{33})$ with $D(\rho)=0.03,~0.15$ and $0.55$, respectively.}
\end{center}
\label{Fig:1}
\end{figure*}
\end{widetext}

\section{\bf Dynamics of quantum discord under local nondissipative channels}\label{dynamics}
It has been discovered that for two-qubit states, the quantum discord is invariant under some decoherence channels in a finite time interval\cite{Mazzola,Li}. To verify if such phenomena still exist in multi-qubit systems, we consider that the states $\rho$ (\ref{rho3}) and (\ref{rho4}) under the phase flip channel, with the Kraus operators
$\Gamma_0^{(A_1)}=$ diag$(\sqrt{1-p/2},\sqrt{1-p/2})\otimes I \otimes \dots \otimes I$, $\Gamma_1^{(A_1)}=$ diag$(\sqrt{p/2},-\sqrt{p/2})\otimes I \otimes \dots \otimes I$, $\dots$,
$\Gamma_0^{(A_N)}= I \otimes \dots \otimes I\otimes$ diag$(\sqrt{1-p/2},\sqrt{1-p/2})$, $\Gamma_1^{(A_N)}= I \otimes \dots \otimes I\otimes$ diag$(\sqrt{p/2},-\sqrt{p/2})$, where $N=3,4$, $p=1-\exp(-\gamma t)$, $\gamma$ is the phase damping rate.

Let $\varepsilon(\cdot)$ represent the operator of decoherence. For the three-qubit state (\ref{rho3}) under the phase flip channel, we have
\begin{align}
\varepsilon(\rho)=&\frac{1}{8}(I\otimes I\otimes I+(1-p)^3c_1\sigma_1\otimes\sigma_1\otimes\sigma_1\nonumber\\
&+(1-p)^3c_2\sigma_2\otimes\sigma_2\otimes\sigma_2+c_3\sigma_3\otimes\sigma_3\otimes\sigma_3).      \label{Eq:epsilonrho3}
\end{align}
From (\ref{cl3}), we obtain
\begin{align}
&D_{A_1;A_2;A_3}(\varepsilon(\rho))\nonumber\\
=&\frac{1+\delta}{2}\log_2(1+\delta)+\frac{1-\delta}{2}\log_2(1-\delta)\nonumber\\
&-\frac{1+\theta}{2}\log_2(1+\theta)-\frac{1-\theta}{2}\log_2(1-\theta),
\end{align}
where $\delta=\sqrt{(1-p)^6c_1^2+(1-p)^6c_2^2+c_3^2}, $ $\theta=max\{|(1-p)^3c_1|,|(1-p)^3c_2|,|c_3|\}. $
Notice that the derivative of $D_{A_1;A_2;A_3}(\varepsilon(\rho))$ with respect to $p$ is always less than $0$.

{\sf [Proof]}
From (28), in general, the derivative of $D_{A_1;A_2;A_3}(\varepsilon(\rho))$ can be cast as
\begin{align}
D'_{A_1;A_2;A_3}(\varepsilon(\rho))=\frac{1}{2}[\theta'\log_2(\frac{1-\theta}{1+\theta})-\delta'\log_2(\frac{1-\delta}{1+\delta})].\nonumber
\end{align}
In particular for
\begin{align}
\delta=\sqrt{(1-p)^6c_1^2+(1-p)^6c_2^2+c_3^2}>0,\nonumber
\end{align}
\begin{align}
\theta=max\{|(1-p)^3c_1|,|(1-p)^3c_2|,|c_3|\}>0.\nonumber
\end{align}
This implies that
\begin{align}
\log_2(\frac{1-\theta}{1+\theta})<0
\quad and \quad
\log_2(\frac{1-\delta}{1+\delta})<0.\nonumber
\end{align}
we have that
\begin{align}
\delta'=-\frac{3(c_1^2+c_2^2)}{\delta}(1-p)^5,\nonumber
\end{align}
given that $0<(1-p)^5<1$ then $\delta'<0.$

If $\theta=|c_3|$ then
\begin{align}
D'_{A_1;A_2;A_3}(\varepsilon(\rho))=\frac{-\delta'}{2}\log_2(\frac{1-\delta}{1+\delta})<0.\nonumber
\end{align}
And for $\theta=|(1-p)^3c_1|$,  in this case $\theta=(1-p)^3|c_1|$  because $0<1-p<1$  then
\begin{align}
\theta'=-3(1-p)^2 |c_1|<0.\nonumber
\end{align}
Then
\begin{align}
D'_{A_1;A_2;A_3}(\varepsilon(\rho))
=&\frac{3}{2}[\frac{(c_1^2+c_2^2)}{\delta}(1-p)^5\log_2(\frac{1-\delta}{1+\delta})\nonumber\\
&-(1-p)^2|c_1|\log_2(\frac{1-\theta}{1+\theta})].\nonumber
\end{align}
Here, we assume that $D_{A_1;A_2;A_3}(\varepsilon(\rho))$ is monotonically decreasing, i.e., $D'_{A_1;A_2;A_3}(\varepsilon(\rho))<0.$ In order to show these, it must satisfy that
\begin{align}
\frac{(c_1^2+c_2^2)}{\delta}(1-p)^5\log_2(\frac{1-\delta}{1+\delta})<(1-p)^2|c_1|\log_2(\frac{1-\theta}{1+\theta}),\nonumber
\end{align}
by multiplying by the positive numbers $(1-p)$ and $\delta$
\begin{align}
(c_1^2+c_2^2)(1-p)^6\log_2(\frac{1-\delta}{1+\delta})<\delta\theta\log_2(\frac{1-\theta}{1+\theta}).\nonumber
\end{align}
Therefore, $D'_{A_1;A_2;A_3}(\varepsilon(\rho))<0, $ which means that $D_{A_1;A_2;A_3}(\varepsilon(\rho))$ is monotonically decreasing.
$\Box$

Hence, the frozen phenomenon of quantum discord dos not exist for three-qubit states under the phase flip channel.
Since the quantum discord of three-qubit and $(2v+1)$-qubit are the same, the odd-qubit systems do not exhibit frozen phenomenon of quantum discord under the phase flip channel.
For instance, take $c_1=\frac{4}{5}$, $c_2=\frac{c_1}{2}$ and $c_3=\frac{1}{2}$ in the initial state, the dashed line in Fig. 2 shows the dynamic behavior of the quantum discord under the phase flip channel.

\begin{figure}[h]
\scalebox{2.0}{\includegraphics[width=3.9cm]{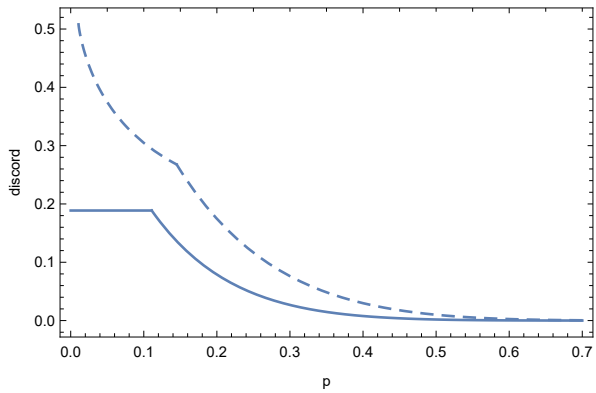}}
\caption{Quantum discord of the three-qubit state (dashed line) and quantum discord of the four-qubit state (solid line) under phase flip channel for $c_1=\frac{4}{5}$, $c_2=\frac{c_1}{2}$, $c_3=\frac{1}{2}$.}
\label{transition}
\end{figure}

For $N=4$, the state $\rho$ under the phase flip channel is give by
\begin{align}
\varepsilon(\rho)=&\frac{1}{16}(I\otimes I\otimes I\otimes I
+(1-p)^4c_1\sigma_1\otimes\sigma_1\otimes\sigma_1\otimes\sigma_1\nonumber\\
&+(1-p)^4c_2\sigma_2\otimes\sigma_2\otimes\sigma_2\otimes\sigma_2\nonumber\\
&+c_3\sigma_3\otimes\sigma_3\otimes\sigma_3\otimes\sigma_3).      \label{Eq:epsilonrho4}
\end{align}
Noting that $c_3$ is independent on time, we consider the case that $c_2=c_1c_3,~ -1\leq c_3\leq 1$. Then we have $|c_2|\leq|c_1|$ for any $p$. From (\ref{cl4}) we obtain the quantum discord
\begin{align}
&D_{A_1;A_2;A_3;A_4} (\varepsilon(\rho))\nonumber\\
=&\frac{1+c_3}{2}\log_2(1+c_3)+\frac{1-c_3}{2}\log_2(1-c_3)\nonumber\\
&+\frac{1+(1-p)^4c_1}{2}\log_2(1+(1-p)^4c_1)\nonumber\\
&+\frac{1-(1-p)^4c_1}{2}\log_2(1-(1-p)^4c_1)\nonumber\\
&-\frac{1+\sigma}{2}\log_2(1+\sigma)-\frac{1-\sigma}{2}\log_2(1-\sigma),\nonumber
\end{align}
where $\sigma=max\{|(1-p)^4c_1|, |c_3|\}. $

When $max\{|(1-p)^4c_1|,|c_3|\}=|(1-p)^4c_1|$, we have
\begin{align}
&D_{A_1;A_2;A_3;A_4} (\varepsilon(\rho))\nonumber\\
=&\frac{1+c_3}{2}\log_2(1+c_3)+\frac{1-c_3}{2}\log_2(1-c_3),\nonumber
\end{align}
$D_{A_1;A_2;A_3;A_4} (\varepsilon(\rho))$ is constant under the decoherence channel during the time interval. Otherwise,
\begin{align}
&D_{A_1;A_2;A_3;A_4} (\varepsilon(\rho))\nonumber\\
=&\frac{1+(1-p)^4c_1}{2}\log_2(1+(1-p)^4c_1)\nonumber\\
&+\frac{1-(1-p)^4c_1}{2}\log_2(1-(1-p)^4c_1),\nonumber
\end{align}
which monotonically decreases to zero.

Therefore, to calculate the quantum discord, we need to determine the magnitude of $|(1-p)^4c_1|$ and $|c_3|$. If for $|c_1|>|c_3|$ there exist $0\leq p_0\leq 1$ such that $max\{|(1-p)^4c_1|,|c_3|\}=|(1-p)^4c_1|$
for $0\leq p\leq p_0,$  and $max\{|(1-p)^4c_1|,|c_3|\}=|c_3|$ for $ p_0\leq p\leq 1$,
then $D_{A_1;A_2;A_3;A_4} (\varepsilon(\rho))$ remains unchanged first, and then monotonicity goes down to zero.
As an example, set $c_1=\frac{4}{5}$, $c_2=\frac{c_1}{2}$ and $c_3=\frac{1}{2}$.
The solid line in Fig. 2 shows the dynamic behavior of quantum discord under the phase flip channel. A sudden transition of quantum discord happens at $p=0.11086.$
The frozen phenomenon of quantum discord exists for the four-qubit states under the phase flip channel, while for the case of three-qubit states, such phenomenon does not exist.

In \cite{Li}, it has been shown that the frozen phenomenon of quantum discord also exists when the phase noise acts on two-qubit states.
Since the quantum discord of two-qubit and $(4v-2)$-qubit are the same, and the four-qubit quantum discord is equal to that of $(4v)$-qubit states,
the even-qubit systems exhibit frozen phenomenon of quantum discord under the phase flip channel, while the odd-qubit systems not.

\section{\bf summary}\label{discuss}
We have studied the quantum discord for a family to multi-qubit states. Analytical formulae have been derived in detail for $(2 v+1)$, $(4v-2)$ and $(4v)$-qubit states. The level surfaces of quantum discord have been depicted. It has been shown that under the phase flip channel the quantum discord could still keep constant in a certain time interval for the even-qubit systems, but not for odd-qubit systems. Our results may highlight further investigations on multipartite quantum discord and their applications in quantum information processing.

\bigskip
\noindent {\bf Acknowledgments}
This work is supported by NSFC under numbers 11765016,
12075159, 12065021, 11905131, Key Project of Beijing Municipal Commission of Education (KZ201810028042), Beijing Natural Science Foundation (Z190005), Academy for Multidisciplinary Studies, Capital Normal University, and Shenzhen Institute for Quantum Science and Engineering, Southern University
of Science and Technology, Shenzhen 518055, China (No. SIQSE202001). X.B.L. was supported
by the Natural Science Foundation of Jiangxi Province (Grant No. 20202BAB201010).

B.L and C.L.Z contributed equally to this work.

\end{document}